\documentclass[runningheads]{llncs}

\usepackage[utf8]{inputenc}

\usepackage{amsmath,amssymb}

\usepackage[vlined,ruled]{algorithm2e}

\DontPrintSemicolon
\newcommand{\vertex}[2]{\ensuremath{v^{(#1)}_{#2}}}
\newcommand{\pos}[1]{\ensuremath{\textrm{pos}[#1]}}
\newcommand{\pred}[1]{\ensuremath{\textrm{pred}[#1]}}
\renewcommand{\align}[1]{\ensuremath{\textrm{align}[#1]}}
\renewcommand{\root}[1]{\ensuremath{\textrm{root}[#1]}}
\newcommand{\sink}[1]{\ensuremath{\textrm{sink}[#1]}}
\newcommand{\shift}[1]{\ensuremath{\textrm{shift}[#1]}}
\newcommand{\neighborings}[1]{\ensuremath{\textrm{neighborings}[#1]}}
\newcommand{\layer}[1]{\ensuremath{\textrm{layer}[#1]}}

\usepackage{xcolor}
\newcommand{\correction}{\color{red!50!black}}

\usepackage{graphicx}\graphicspath{{./figs/}}

\begin{document}
\title{Erratum:\\\mbox{Fast and Simple Horizontal Coordinate Assignment}}
\author{Ulrik Brandes\inst{1,}\orcidID{0000-0002-1520-0430}
\and Julian Walter\inst{2}
\and Johannes Zink\inst{2,}\orcidID{0000-0002-7398-718X}}
\authorrunning{U.~Brandes, J.~Walter and J.~Zink}
\institute{ETH Z\"urich, Zürich, Switzerland. \email{ubrandes@ethz.ch}
\and Universität Würzburg, Würzburg, Germany. \email{zink@informatik.uni-wuerzburg.de}}
\maketitle
\begin{abstract}
We point out two flaws in the algorithm of Brandes and Köpf (Proc.\ GD~2001),
which is often used for the horizontal coordinate assignment in Sugiyama's framework for layered layouts. 
One of them has been noted and fixed multiple times,
the other has not been documented before and requires a non-trivial adaptation.
On the bright side, neither running time nor extensions of the algorithm
are affected adversely.

\keywords{Sugiyama layout \and Layered graph drawing \and Horizontal compaction.}
\end{abstract}

\section{Introduction}\label{sec:intro}

Horizontal coordinate assignment is the final phase
of Sugiyama's framework for drawing directed graphs~\cite{sugiyama_etal:81,hn-hgda-13}.
By that point, the original graph has been transformed into a directed acyclic graph
with its vertices assigned to a sequence of layers,
and ordered within these layers. 
Moreover, the layer assignment is proper,
i.e., edges spanning layers are subdivided by dummy vertices
so that two vertices are adjacent only if they are assigned to neighboring layers.

Brandes and Köpf~\cite{bk-fshca-02} proposed a simple algorithm
to determine $x$-coordinates for the vertices such that
their order is preserved and
the distance between each two vertices is at least some~$\delta > 0$.
The algorithm runs in time linear in the size of the input
and is implemented and taught widely.

The original publication, however, contains two mistakes.
The first is well known and easy to fix.
The other, however, is not commonly known, less straightforward,
and, to the best of our knowledge, no solution has been proposed.

The purpose of this short paper is
to identify clearly where these two mistakes occur,
and to provide suitable corrections. 
Our proposed corrections do not alter the behavior of the algorithm
and do not clash with extensions that have been proposed to accommodate,
for instance, vertices of varying size, ports, and other constraints~\cite{b-rasf-07,rsch-sphnca-15}.

\section{Background}\label{sec:recap}

In this section we sketch the algorithm of Brandes and Köpf.
Two mistakes, both occurring in the same subroutine,
are identified and corrected in the next section.

In the horizontal coordinate assignment problem,
we are given a layered directed acyclic graph $G=(V\uplus B,E;L)$,
i.e., a directed acyclic graph $G=(V\cup B,E)$ 
with two types of vertices
(original vertices and potential bend points of edges)
and a sequence of layers $L=(L_1,\ldots,L_h)$ 
that forms an ordered partition of $V\cup B$.
The layering is required to be proper,
i.e., adjacent vertices are in neighboring layers.
Moreover, the vertices are ordered
within each layer $L_i=\{\vertex{i}{1},\ldots,\vertex{i}{|L_i|}\}$,
where $\vertex{i}{k}$ denotes the $k$th vertex in the $i$th layer.
Thus, the topology of the drawing has already been fixed,
and the horizontal coordinate assignment algorithm is now to determine
the shape and the final metrics.

\begin{figure}
\centering
\includegraphics[width=0.75\linewidth]{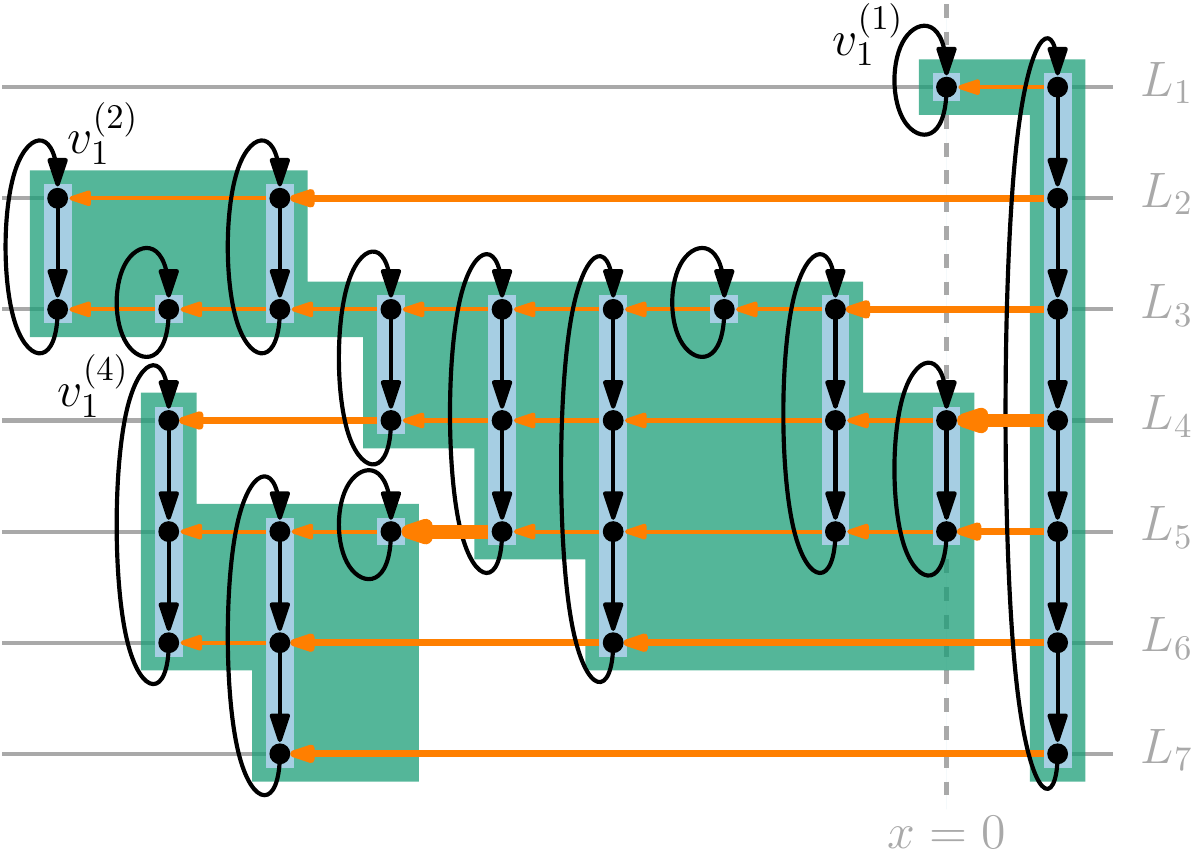}\\
\caption{Example illustrating the various definitions.
The vertices in $V\cup B$ together with the alignment (black) and predecessor (orange) relations
form a directed graph, the strongly connected components of which are the blocks (light blue rectangles). Blocks therefore form a DAG and are assigned to a class (green polygon) defined by the highest sink (labeled vertex) they can reach.}
\label{fig:example}
\end{figure}

The algorithm of Brandes and Köpf consists of the following three steps,
the first two of which are repeated four times,
once for each combination of vertical and horizontal orientations. 
We describe the top-to-bottom, left-to-right case
in which vertices are left-aligned conditional on upper neighbors;
the other three cases are symmetric.
Figure~\ref{fig:example} illustrates the subsequent definitions.
\begin{enumerate}
\item \emph{Vertical alignment:}
  Every vertex $v\in V\cup B$ is assigned a vertex, $\align{v}$,
  that can be~$v$ itself or otherwise it is either a neighbor on the next lower level
  or the uppermost vertex when tracing back these alignment pointers.
  This partitions the vertex set into singly-linked (possibly singleton) cycles
  called \emph{blocks}, where the topmost vertex in a block is called the \emph{root}.
  All vertices in a block ultimately receive the same $x$-coordinate
  in order to straighten long edges (by aligning subdivision vertices in~$B$)
  and to reduce edge lengths (by aligning vertices in $V$ with a median neighbor
  in the layer above). 
\item \emph{Horizontal compaction:} Next, vertices are aligned to the left. 
  By adding an edge between every vertex and its predecessor (left neighbor), if present,
  we obtain a directed acyclic graph of blocks.
  The sinks of this DAG are blocks whose roots are leftmost in their layer
  and are hence ordered vertically. 
  Let the sink of a vertex be the highest sink reachable from any vertex in its block by the predecessor and alignment edges.
  All vertices and blocks with the same highest sink form a \emph{class}.
  In Figure~\ref{fig:example}, there are three sinks defining a class, namely $v_1^{(1)}$, $v_1^{(2)}$, and $v_1^{(4)}$.
  The vertices in a class have a unique minimum $x$-coordinate relative to their common sink,
  and it can be obtained from longest-path layering.
  We say two classes are \emph{neighboring} each other if they contain vertices on the same layer.
  If a class is not neighboring another class with a higher sink,
  its own sink is placed at $x$-coordinate zero.
  In Figure~\ref{fig:example}, this is the case for the class with sink $v_1^{(1)}$.
  Otherwise it is placed as close to higher classes as possible.
  In Figure~\ref{fig:example}, this is the case for the two other classes.
\item \emph{Balancing:} The four $x$-coordinates per vertex
  obtained from repeating the previous steps with different orientations
  are combined into one by aligning the four resulting drawings and taking the average of the two median coordinates per vertex. 
\end{enumerate}
Both problems occur in the horizontal compaction step.
We therefore reproduce the pseudo-code of Alg.~3 from the original publication. 
It does not require any special data structures other than the vertex arrays
listed in Tab.~\ref{tab:vars}.

\begin{table}
\caption{All auxiliary data is stored in vertex arrays.}
\label{tab:vars}
\centering
\begin{tabular}{|r@{\quad}|@{\quad}l|}\hline
$\vertex{i}{k}$ & $k$th vertex in $i$th layer\\[0.5ex]\hline
\bf array & \bf content\\\hline
$\pos{v}$ & rank in layer, i.e., $\pos{\vertex{i}{k}}=k$\\
$\pred{v}$ & preceding vertex in same layer, i.e., $\pred{\vertex{i}{k}}=\vertex{i}{k-1}$\\
$\align{v}$ & vertex below in same block (or root for lowest)\\
$\root{v}$ & highest vertex in same block\\
$\sink{v}$ & if $v$ is a root:\enskip highest leftmost root, defining the class\\
$\shift{v}$ & if $v$ is a sink:\enskip offset by which class is to be shifted\\
$x[v]$ & horizontal coordinate (relative to sink, later absolute)\\\hline
\end{tabular}
\end{table}

\begin{algorithm}[t]
\renewcommand{\thealgocf}{3}
\caption{Horizontal compaction}
\BlankLine
\KwSty{function} place\_block($v$)\\
\Begin
{ \If{$x[v]$ undefined}
  { $x[v]\gets0$;\quad $w\gets v$\;
    \Repeat{$w=v$}
    { \If{$\pos{w}>1$}
      { $u\gets\root{\pred{w}}$\;
        place\_block($u$)\;
        \lIf{$\sink{v}=v$}{$\sink{v}\gets\sink{u}$}
        \uIf{$\sink{v}\neq\sink{u}$}{{\correction\nlset{(A)}\label{ln:offset}}$\shift{\sink{u}}\gets\min\{ \shift{\sink{u}},\, x[v] - x[u] - \delta \}$\;}
        \Else{$x[v]\gets\max\{ x[v],\,x[u]+\delta\}$\;}
      }
      $w\gets\align{w}$\;
    }
  }
}
\BlankLine
initialize $\sink{v}\gets v$, $v\in V\cup B$\;
initialize $\shift{v}\gets\infty$, $v\in V\cup B$\;
initialize $x[v]$ to be undefined, $v\in V\cup B$\;
\BlankLine
{\scriptsize // root coordinates relative to sink}\\
\ForEach{$v\in V\cup B$}{\lIf{$\root{v}=v$}{place\_block($v$)}}
\BlankLine
{\scriptsize // absolute coordinates}\\
\ForEach{$v\in V\cup B$}
{ $x[v]\gets x[\root{v}]$\;
  {\correction\nlset{(S)}\label{ln:shift}}\If{$\shift{\sink{\root{v}}}<\infty$}
  {$x[v]\gets x[v]+\shift{\sink{\root{v}}}\;$}
}
\end{algorithm}

\section{Problems and Corrections}\label{sec:corrections}

The problematic lines of pseudo-code concern the computation 
of absolute coordinates and the accumulation of shift values for classes.
They are marked~\ref{ln:shift} and~\ref{ln:offset} in Alg.~3.

\paragraph{Double shifting.}
The easier problem manifests itself in line~\ref{ln:shift} of Alg.~3.
To determine absolute coordinates, we want to combine, for every vertex,
the relative coordinate of its block with respect to the sink of its class (stored at the root),
with the offset of the entire class (stored at the sink).
The way Alg.~3 is formulated, offsets are actually added twice
for vertices whose root has already been shifted to its final $x$-coordinate.

Multiple implementors have noted and fixed the problem in different ways.
The correction we provide in Alg.~3a
involves aligning blocks in the recursive subroutine \textsf{place\_block}
as soon as the relative coordinate of the root with respect to the sink has been determined.
In this way, block alignment and shifting are separated and 
only offsets have to be added to every vertex in the last line of Alg.~3b.
Given the name of the subroutine, this should have been the procedure all along. 
And to simplify later steps, we also assign the sink to vertices, not just roots.

\begin{algorithm}[t]
\renewcommand{\thealgocf}{3a}
\caption{\label{alg:placeblock}Placing a block (not just the root)}
\BlankLine
\KwSty{function} place\_block($v$)
\SetKwBlock{Begin}{}{end}
\Begin
{ \If{$x[v]$ undefined}
  { $x[v]\gets0$\,;\quad $w\gets v$\;
    \Repeat{$w=v$}
    { \If{$\pos{w}>1$}
      { $u\gets\root{\pred{w}}$\;
        place\_block($u$)\;
        \lIf{$\sink{v}=v$}{$\sink{v}\gets\sink{u}$}
        \lIf{$\sink{v}=\sink{u}$}
            {$x[v]\gets\max\{ x[v],\,x[u]+\delta\}$}
        {\correction{\scriptsize // class offsets (shift of sinks) are calculated later}}
      }
      $w\gets\align{w}$\;
    }
    {\correction{\scriptsize // align the whole block}\\
    \While{$\align{w}\neq v$}{$w\gets\align{w}$\; $x[w]\gets x[v]$\; $\sink{w}\gets\sink{v}$}}
  }
}
\end{algorithm}

\paragraph{Shift accumulation.}
Problem~\ref{ln:offset} is a serious flaw that has not been documented before.
It is not clear how often it has actually resulted in visually notable problems,
but it is easy to craft instances in which it becomes severe.

During compaction, shift values are determined
for preceding classes relative to the current class.
The implicit assumption in line~\ref{ln:offset} is, however,
that the current class is not shifted itself.
As a consequence, shift values are not accumulated along critical paths in the DAG of classes.
It is not sufficient to consider the shift value of $\sink{v}$ in line~\ref{ln:offset}
because it may not be in its final state, yet.
The correction we propose in Alg.~3b therefore
separates the determination of shift values
from the placement of blocks relative to the sink of their class.

A high-level solution is to construct the DAG of classes
and perform another longest-path layering,
this time with respect to the sources in the DAG,
by placing each class as close as possible to its right neighbors.
The following lemma implies that classes are stacked diagonally, top-to-bottom, right-to-left.
It is therefore possible to accumulate shift values in this order
without recursion or explicit representation of all preceding classes.

\begin{lemma}
Let $\vertex{i}{k_1},\vertex{i}{k_2}$
be two vertices in the same layer $L_i$, $1\leq i\leq h$
with $\sink{\vertex{i}{k_1}}=\vertex{h_1}{1}$ and $\sink{\vertex{i}{k_2}}=\vertex{h_2}{1}$
after all blocks have been placed. 
If $k_1<k_2$, then $h_2\leq h_1$.
\end{lemma}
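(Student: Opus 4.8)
The plan is to argue from the characterization of sinks given in Section~\ref{sec:recap}, rather than unwinding the recursion inside \textsf{place\_block}. Recall that $\sink{v}$ is the root of the highest sink \emph{reachable} from $v$ in the directed graph $D$ on $V\uplus B$ whose arcs are the alignment arcs $w\to\align{w}$ and the predecessor arcs $w\to\pred{w}$ (present whenever $\pos{w}>1$). The sinks of the resulting block DAG are exactly the blocks lying entirely in the leftmost column, so each has the form $\vertex{\ell}{1}$ where $\ell$ is the layer of its root; since the block DAG is finite and acyclic, every vertex reaches at least one sink, and ``the highest reachable sink'' — the one of smallest layer index among those reachable — is well defined. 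In particular $\sink{\vertex{i}{k_1}}$ and $\sink{\vertex{i}{k_2}}$ genuinely have the form $\vertex{h_1}{1}$ and $\vertex{h_2}{1}$ as in the statement, and it suffices to prove that every sink reachable in $D$ from $\vertex{i}{k_1}$ is also reachable from $\vertex{i}{k_2}$.

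The point is that a vertex reaches everything its right neighbours reach. Since $k_1<k_2$, we have $\pos{\vertex{i}{m}}=m>1$ for every $m$ with $k_1<m\le k_2$, so each predecessor arc $\vertex{i}{m}\to\vertex{i}{m-1}$ is present in $D$; chaining them gives a directed path $\vertex{i}{k_2}\to\vertex{i}{k_2-1}\to\cdots\to\vertex{i}{k_1}$. Hence everything out-reachable from $\vertex{i}{k_1}$ is out-reachable from $\vertex{i}{k_2}$ as well. In particular the sink $\vertex{h_1}{1}=\sink{\vertex{i}{k_1}}$, being reachable from $\vertex{i}{k_1}$, is reachable from $\vertex{i}{k_2}$; and as $\vertex{h_2}{1}=\sink{\vertex{i}{k_2}}$ is by definition the reachable sink of smallest layer index, we conclude $h_2\le h_1$.

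I do not expect a real obstacle in the argument proper — it is the three lines above, and it does not even use that blocks are non-crossing. The only step that deserves care is the reduction in the first paragraph: that the value stored in $\sink{\cdot}$ once all blocks have been placed really is the root of the highest reachable sink, i.e.\ that \textsf{place\_block} computes what Section~\ref{sec:recap} defines $\sink{}$ to be. One pins this down by an induction along the block DAG, observing that when \textsf{place\_block} first processes a block $B$ it sets $\sink{\root{B}}$ to the sink recorded for the block entered through $B$'s topmost outgoing predecessor arc, and invoking the non-crossing of blocks to see that this choice cannot miss a higher reachable sink; this is precisely the correctness claim already relied on in Section~\ref{sec:recap}. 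I would dispatch it as a short preliminary observation and then close with the reachability argument.
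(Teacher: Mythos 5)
Your argument is correct and is essentially the paper's own proof: the chain of predecessor arcs $\vertex{i}{k_2}\to\cdots\to\vertex{i}{k_1}$ shows the right vertex reaches every sink the left one reaches, so its highest reachable sink is at least as high; your extra preliminary observation (that \textsf{place\_block} really computes the highest reachable sink) is a reasonable bit of added care that the paper simply takes as the definition of $\sink{\cdot}$. Only a cosmetic slip: your topic sentence says a vertex reaches everything its \emph{right} neighbours reach, while the argument that follows (correctly, and as in the paper) establishes the opposite direction, namely that it reaches everything its \emph{left} neighbours reach.
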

\begin{proof}
In the directed graph consisting of the alignment and predecessor relations,
every vertex can reach all sinks (and maybe more) that any vertex further to the left can reach.
The highest reachable sink is therefore at least as high as that of any vertex to the left.
\end{proof}

The monotonic ordering of classes can be utilized as follows.
Instead of ensuring that all critical paths determining the total shifts of the classes
have already been traversed by recursion
(as in the placement of blocks)
we can now start from the class with the highest sink
because we know it is rightmost in each layer.
For each sink, from top to bottom, we trace the lower contour of each class,
and make sure that the preceding classes are shifted such that
a minimum separation guarantees they do not overlap.
This is similar to the shifting of subtrees
in the tree-drawing algorithm of Reingold and Tilford~\cite{reingold/tilford:81}.

\begin{algorithm}[t]
\renewcommand{\thealgocf}{3b}
\caption{Horizontal compaction}
initialize $\sink{v}\gets v$, $v\in V\cup B$\;
initialize $\shift{v}\gets\infty$, $v\in V\cup B$\;
initialize $x[v]$ to be undefined, $v\in V\cup B$\;
\BlankLine
{\scriptsize // coordinates relative to sink}\\
\ForEach{$v\in V\cup B$}{\lIf{$\root{v}=v$}{place\_block($v$)}}
\BlankLine
{\correction
{\scriptsize // class offsets}\\
\For{$i=1,\ldots,h$}
{ \If{$\sink{\vertex{i}{1}}=\vertex{i}{1}$}
  { \lIf{$\shift{\sink{\vertex{i}{1}}}=\infty$}{$\shift{\sink{\vertex{i}{1}}}\gets0$}
    $j\gets i$\,;\quad$k\gets1$\;
    \Repeat{$k>|L_j|$ \KwSty{or} $\sink{v}\neq\sink{\vertex{j}{k}}$}
    { $v\gets\vertex{j}{k}$\;
      \While{$\align{v}\neq\root{v}$}
      { $v\gets\align{v}$\,;\quad$j\gets j+1$\;
        \If{$\pos{v}>1$}
        { $u\gets\pred{v}$\;
          \mbox{\color{black}$\shift{\sink{u}}\gets\min\{\shift{\sink{u}},\, {\correction\shift{\sink{v}}+}x[v]-(x[u]+\delta) \}$}
        }
      }
      $k\gets\pos{v}+1$
    }
  }
}}
\BlankLine
{\scriptsize // absolute coordinates}\\
\lForEach{$v\in V\cup B$}{\correction$x[v]\gets x[v]+\shift{\sink{v}}$}
\end{algorithm}

The implementation shown in Alg.~3b traverses the layers from top to bottom.
If the leftmost vertex is a sink that still has an infinite shift value,
the class is not to the left of any previously processed class with a higher sink.
It can therefore be aligned without any offset, i.e., a shift value of zero.
The lower contour of a class is traced by either moving down in a block (using $\align{v}$) 
or, if we have reached the lower end of a block pointing back at the root, 
moving one position to the right. 
The tracing ends when there is no right neighbor
or if the right neighbor is part of a class with a higher sink.

While traversing the lower contour, 
a predecessor~$u$ of the current vertex~$v$, if any,
belongs to a class with a lower sink.
We therefore update the shift value of $\sink{u}$
to ensure minimum separation between the classes of~$u$ and~$v$.
Since we are starting with an upper bound and proceed from right to left,
shift values can only decrease,
pushing classes further to the left after their initial placement.

\section{Discussion}\label{sec:concl}

We identified and corrected two flaws in the compaction step
of the well-known horizontal coordinate assignment algorithm
of Brandes and Köpf~\cite{bk-fshca-02}.
The corrections are internal to the algorithm, 
they do not alter running time in any substantial way,
and are unlikely to interfere with any of the algorithm's extensions.  

The first mistake was the result of a discrepancy between the implementation
and an attempt to make the pseudo-code more readable.
Others have identified and corrected the issue before,
notably Florian Fischer (Diplom~2004, Passau)
and John Julian Carstens (M.Sc.~2012, Kiel) in their theses.
A frequently proposed correction
only checks whether the vertex to be shifted is a root,
and therefore relies implicitly on the top-to-bottom processing order
to ensure that roots are shifted
before the other members of a block inherit their coordinate.
We believe that the correction proposed here is more appropriate.

Walter and Zink recently noted and clearly identified problem~\ref{ln:offset}.
We developed multiple corrections,
but present only one that aligns most closely with the original algorithm.
We shortly describe one alternative version of Alg.~3b in the appendix,
also because it operates more explicitly on the DAG of classes.
These corrections were developed without realizing that 
the problem had already been identified 
by Damon Lundin while working for IBM Blueworks Live (personal communication, 2013).
He even proposed a fix that bears similarity with our independently developed solutions
but does not work in general.
Unfortunately, the conversation was only recovered while preparing this contribution.

\bibliographystyle{splncs04}
\bibliography{references}
\appendix

\section*{Appendix}
\label{appendix}
Since sinks (and thus class membership) have already been determined,
we do not actually have to traverse contours to identify vertices 
whose predecessors (left neighbors) are in another class.
Instead, we can iterate over all vertices and compare their sink with that of the predecessor.
By storing the pairs of neighboring vertices in different classes, 
we are essentially constructing the DAG of classes (including its parallel edges).

An alternative approach for Alg.~3b is therefore the following.
First, class adjacencies are recorded in a list~$\neighborings{i}$ for each sink~$v_1^{(i)}$.
Shift values are then propagated down from the higher to lower classes.
Note that this may not be done during the first enumeration 
because all neighborings of a higher class need to be processed 
before propagating values from its neighboring classes.

\begin{algorithm}
        \renewcommand{\thealgocf}{3b (alternative)}
        \caption{Horizontal compaction}
        initialize $\sink{v}\gets v$, $v\in V\cup B$\;
        initialize $\shift{v}\gets\infty$, $v\in V\cup B$\;
        initialize $\layer{v^{(i)}} \gets i$, $v^{(i)}\in V\cup B$\;
        initialize $x[v]$ to be undefined, $v\in V\cup B$\;
        \BlankLine
        {\scriptsize // coordinates relative to sink}\\
        \ForEach{$v\in V\cup B$}{\lIf{$\root{v}=v$}{place\_block($v$)}}
        \BlankLine
        {\correction
        {\scriptsize // class offsets}\\
        initialize $\neighborings{i} \gets \emptyset$, $i \in \{1, \dots, 
        h\}$\;
        \BlankLine
        {\scriptsize // find all neighborings}\\
        \For{$i \gets 1, \dots, h$} {
                \For{$j \gets |L_i|, \dots, 2$} {
                        \If{$\sink{\vertex{i}{j-1}} \ne \sink{\vertex{i}{j}}$} {
                                $\neighborings{\layer{\sink{i\vertex{i}{j}}}} \gets
                                \neighborings{\layer{\sink{\vertex{i}{j}}}} \cup \{ (\vertex{i}{j-1}, \vertex{i}{j}) \}$\;
                        }
                }
        }
        \BlankLine
        {\scriptsize // apply shift for all neighborings}\\
        \For{$i \gets 1, \dots, h$} {
                \lIf{$\shift{\sink{\vertex{i}{1}}}=\infty$}{$\shift{\sink{\vertex{i}{1}}}\gets0$}
                \For{$(u, v) \in \neighborings{i}$}{
                        \mbox{\color{black}$\shift{\sink{u}}\gets\min\{\shift{\sink{u}},\, {\correction\shift{\sink{v}}+}x[v]-(x[u]+\delta) \}$}
                }
        }
        }
        \BlankLine
        {\scriptsize // absolute coordinates}\\
        \lForEach{$v\in V\cup B$}{\correction$x[v]\gets x[v]+\shift{\sink{v}}$}
\end{algorithm}

\end{document}